\newtheorem{rep@theorem}{\rep@title}
\newcommand{\newreptheorem}[2]{%
\newenvironment{rep#1}[1]{%
 \def\rep@title{#2 \ref{##1}}%
 \begin{rep@theorem}}%
 {\end{rep@theorem}}}
\newtheorem{theorem}{Theorem}[section]
\newtheorem{lemma}[theorem]{Lemma}
\newtheorem{proposition}[theorem]{Proposition}
\def\phi{ \varphi }
\theoremstyle{definition}
\newtheorem{definition}[theorem]{Definition}
\newtheorem{example}[theorem]{Example}
\theoremstyle{remark}
\newcommand{\ccirc}[1]{\xymatrix@1{+<1ex>[o][F-]{#1}}}
\title{Exploiting Finite Geometries for Better Quantum Advantages in Mermin-Like Games}
\author[$\dagger$]{Colm Kelleher}
\author[$\dagger,\ddagger$]{Frédéric Holweck}
\author[$\star$]{P\'eter L\'evay}
\affil[$\dagger$]{Laboratoire Interdisciplinaire Carnot de Bourgogne, ICB/UTBM, UMR 6303 CNRS,Universit\'e de Technologie de Belfort-Montbéliard, 90010 Belfort Cedex, France}
\affil[$\ddagger $]{Department of Mathematics and Statistics, Auburn University, Auburn, AL, USA}
\affil[$\star$]{MTA-BME Quantum Dynamics and Correlations Research Group, Budapest University of Technology and Economics, Műegyetem rkp. 3., H-1111 Budapest Hungary}
\begin{document}

\maketitle

\begin{abstract}
Quantum games embody non-intuitive consequences of quantum phenomena, such as entanglement and contextuality. The Mermin-Peres game is a simple example, demonstrating how two players can utilise shared quantum information to win a no - communication game with certainty, where classical players cannot. In this paper we look at the geometric structure behind such classical strategies, and borrow ideas from the geometry of symplectic polar spaces to maximise this quantum advantage. We introduce a new game called the Eloily game with a quantum-classical success gap of $0.2\overline{6}$, larger than that of the Mermin-Peres and doily games. We simulate this game in the IBM Quantum Experience and obtain a success rate of $1$, beating the classical bound of $0.7\overline{3}$ demonstrating the efficiency of the quantum strategy. 
\end{abstract}

\section{Introduction}
Within the realm of quantum mechanics, resources such as superposition and entanglement have been exhaustively studied as useful for both post-classical conceptualisation and technological tools. There is another quantum resource - contextuality - that is often overlooked compared to the other two. It has enjoyed recent attention as a useful resource for quantum computation, see \cite{Budroni21} for a survey on the topic.
\\~\\
Contextuality relates the measurement outcomes of quantum operators
with the set of mutually commuting operators that they exist within. In classical physics, measurement outcomes are independent of the set of measurements also taken on the system before and after the measurement in question. In quantum mechanics, as measurement operators do not in general commute, the set of operators that do commute with the operator in question can be understood to have an affect on the measurement outcome. A set of mutually commuting operators, such that their product is equal to $\pm I_d$, is known as a \textit{context}.
\\~\\
In 1964 John Bell first introduced the notion of non-locality as a necessary condition on hidden quantum variables \cite{bell1964einstein}. Two years later he published a lesser well-known result on hidden variables and contextuality \cite{bell_problem_1966}. This coincided with a paper by Kochen and Specker with a similar result \cite{Kochen_specker}, which is more well known by physicists as the Kochen-Specker (KS) Theorem. This theorem states that any hidden variable model that reproduces the outcomes of quantum mechanics should be context-dependent. Thirty years later, Mermin \cite{Mermin93} and Peres \cite{Peres90} provided an observable-based proof of this theorem, see Section \ref{sec:quantum_games}. In 2002 Aravind reformulated this theorem in terms of a two player, no-communication game with advantage given by quantum information \cite{aravind_bells_2002, brassard_quantum_2005}.
\\~\\
In this work we introduce another such game associated to the Bell-KS Theorem. This new game has a maximal gap between predicted classical outcome and outcomes given by quantum mechanics (i.e. no non-contextual hidden variable models). In Sections \ref{sec:quantum_games}, \ref{sec:doily} we outline the background for quantum games, and how to generalise them based on finite geometries. In Section \ref{sec:symplectic_polar_space} we introduce the notion of symplectic polar spaces, which allow us to construct the canonical ``Eloily" arrangement, furnishing our new game. Section \ref{sec:gq_24_game} describes the associated game, its construction using stabiliser states, and its implementation in the IBM Quantum Experience. We present the results and conclusions in Section \ref{sec:results}. Finally, in Section \ref{sec:quadrangles} we discuss the common formalism of all games presented here and an invariant connecting it with the well-known Cabello inequality\cite{cabello_proposed_2010}.

\subsection{Quantum Games - the Mermin Perez Magic Square Game}\label{sec:quantum_games}

The Mermin-Peres Magic Square is a $3 \times 3$ geometric arrangement of quantum spin operators, demonstrating the impossibility of non-contextual hidden variable models and proving the Bell-KS theorem \cite{bell_problem_1966, Kochen_specker}. An example of the titular square is given in Fig \ref{fig:merminsquare}, with tensor products omitted between single-qubit Pauli operators. Multiplying the 2-qubit operators along any vertical or horizontal line (context) produces the identity $I_{4}$ scaled by $\pm$1. 
\begin{figure}[!h]
\centering
 \includegraphics[width=4cm]{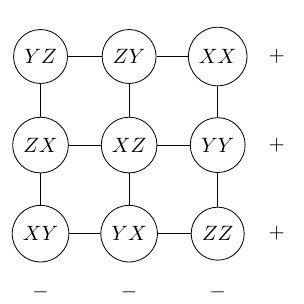}
 \caption{An example of Mermin-Peres Magic Square. Lines denote contexts, which are composed of mutually-commuting operators. Vertical contexts multiply to give $-I_{4}$, while horizontal contexts multiply to give $+I_{4}$. The eigenvalues of the operators multiply to give the eigenvalues of the product - for instance the eigenvalues of the top horizontal context must multiply to $+1$.}\label{fig:merminsquare}
\end{figure}
As each context is composed of mutually commuting operators, the eigenvalues of each operator multiply together to give the eigenvalues of the product of operators. As the products are given by identity up to sign, this sets a simple condition on the individual eigenvalues of any given line/context. It is easy to check that no assignment of $\pm1$ eigenvalues can be made to the points in the square such that the context constraints are satisfied unless one decides that the assignments are context-dependent. This demonstrates the impossibility of any non-contextual hidden variable (NCHV) theory \cite{Mermin93, Peres90}.
\\~\\
This can be converted into a quantum game as follows \cite{aravind_bells_2002, Aravind04}:\\
A referee Charlie provides two players - Alice and Bob - each with a number in the set $\{1,2,3\}$, see Fig. \ref{fig:mermin_game}. Alice and Bob cooperate to win together but cannot communicate during the game and so do not know each other's numbers. They respond to Charlie with strings such that Alice's multiply to $+1$, and Bob's to $-1$. They win iff Alice's response in the position of Bob's number is equal to Bob's response in the position of Alice's number.\\

\begin{figure}
\centering
    \includegraphics[width=0.3\textwidth]{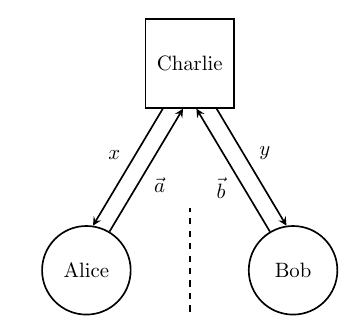}
    \caption{The Mermin-Peres game. Charlie gives Alice and Bob a number $x,y$, who in response return answers $\vec{a} = (a_{1},a_{2},a_{3})$, $\vec{b} = (b_{1},b_{2},b_{3})$. The $a_{i}$ and $b_{i}$ are all selected from the set $\{+1, -1\}$ and conditions on the answers are that $a_{1}a_{2}a_{3} = +1$, $b_{1}b_{2}b_{3} = -1$. The players win iff $a_{y} = b_{x}$.}
    \label{fig:mermin_game}
\end{figure}
They are allowed to cooperate before the game begins, and so share a table of possible responses depending on the numbers given by Charlie. Such a table is given in Figure \ref{fig:mermin_classical}. However, due to the constraints on the rows and columns, it is impossible to fully populate this table in a consistent manner! As the winning condition in the game now relates to the consistency of the intersection points in the table, Alice and Bob can only win with this strategy 8 times out of 9.

\begin{figure}[!h]
\centering
 \includegraphics[width=0.3\textwidth]{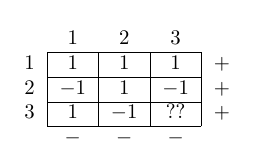}
 \caption{An example of a classical strategy in a $3\times 3$ grid: The triplet sent back by Alice corresponds to the row for a given $x\in\{1,2,3\}$ while the triplet sent back by Bob corresponds to the column $y\in\{1,2,3\}$. The bottom-right cell cannot be populated while keeping both row and column sign conditions met. With this strategy they win the game in $8$ cases out of $9$.}\label{fig:mermin_classical}
\end{figure}

One can however replace the classical table in Fig. \ref{fig:mermin_classical} with a Mermin-Peres square, such as Fig. \ref{fig:merminsquare}. Now Alice and Bob also share a pair of entangled particles before the game begins, and as before depending on the $x,y$ values provided choose a given row \& column. Here, they make their measurements on the particles corresponding to the operators in the table, and note the measurement results as their responses. As the eigenvalues of each operator is either $\pm1$, and the contexts denote mutually commuting operators, the response constraints will be automatically satisfied. 

This approach - known as the quantum strategy - will work with probability 1, beating out the classical strategy. This is a slick demonstration of the Bell-KS theorem, showing that you cannot deterministically apply hidden variables (pre-measurement operator values) to certain operator arrangements in a consistent and context-independent manner. The implementation of this game on noisy quantum computers has been shown in \cite{Holweck21, Kelleher_2_qubit_games, xu_experimental_2022, DRLB}. 

The reason the Mermin-Peres game shows a quantum advantage is due to three factors:
\begin{enumerate}
    \item The Mermin-Peres Grid (Fig. \ref{fig:merminsquare}) provides the players with triplets that automatically satisfy the constraints of the game.
    \item The shared entangled state the players measure on guarantees triplets in agreement on their intersection, winning the game with probability 1.
    \item The \textit{contextual} nature of the grid, as previously stated, implies that there is no NCHV model that can win the game with probability 1.
\end{enumerate}

The degree to which one cannot satisfy all line constraints by classical values is known as the \textit{degree of contextuality} of an arrangement \cite{henri_contextuality_2022}. The degree of contextuality of the Mermin-Peres grid is $1$, meaning there is always one context constraint that is not satisfied by NCHV.

The purpose of this article is to consider different quantum games and compare their quantum advantages:
\begin{definition}
    The \textit{quantum advantage} or quantum-classical success gap is the difference in probability of success between a quantum and classical strategy for a given game.
\end{definition}
For all games here, the quantum success probability is $1$. The advantage depends directly on the degree of contextuality of the associated arrangement. In the Mermin-Peres game, as the classical strategy gives a winning probability of $\omega({\mathscr{M}}) = \frac{8}{9}$, the quantum advantage is $1-\omega({\mathscr{M}}) = \frac{1}{9}$.

A key characteristic of this game is the geometry and constraint structure of the Mermin-Peres Magic Square. This geometry, as well as other arrangements, have been well studied\cite{Brassard05, Arkhipov12, LS}. In this paper, we show how another special geometry - $GQ(2,4)$ - can provide a quantum game with an even larger quantum advantage. But first, we show how to generalise the Mermin-Peres game.

\subsection{The Doily Game}\label{sec:doily}

In the Mermin-Peres square, we had nine 2-qubit operators and three negative lines. There is a single ``constraint" line, in the sense that there is one line that cannot be satisfied in the classically-populated table in Fig. \ref{fig:mermin_classical}. This can be related to the negative lines, as one can check that we can convert two negative lines into positive ones without changing the overall line constraints, by multiplying $XY$ and $YX$ by $-1$. This gives a grid of 1 negative and 5 positive lines with one inconsistent constraint associated to the degree of contextuality. This is the minimum number of negative lines possible in such a grid, and provides us with our single constraint line in Fig. \ref{fig:mermin_classical}.

There are a total of 15 nontrivial 2-qubit operators composed of the Pauli spin operators $\{I, X, Y, Z\}$. Here we exclude the trivial operator $II$ and ignore overall phase factors. One can construct a general geometry from all 15 operators, which contains a total of 3 negative lines, and 3 unsatisfiable constraints that can be mapped to the negative lines \cite{henri_contextuality_2022}. Such a geometry is known as the Doily, and is shown in Fig \ref{fig:doily}. It has 15 points connected by 15 lines, each with 3 points. Each point is contained in 3 lines. Its contextuality has been tested on a quantum computer in \cite{Holweck21}.

\begin{figure}[!h]
 \begin{center}
  \includegraphics[width=5cm]{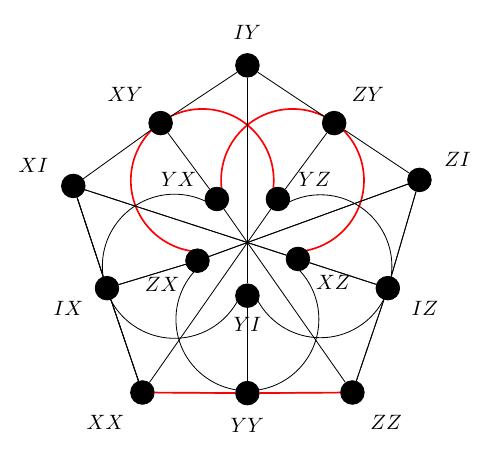}
  \caption{The doily: A $15_3$ point-line configuration that encodes the commutation relations of the two-qubit Pauli group. The red lines are such that the product of the observable is $-I_{4}$ while it is $+I_{4}$ for the other lines. It has degree of contextuality  3.}\label{fig:doily}
 \end{center}
\end{figure}

The doily configuration has been of special interest in multi-qubit arrangements\cite{SZ, muller_multi_qubit_2022, saniga_veldkamp_GQ_2010, henri_contextuality_2022, kelleher_x_states_2021}, in particular its relevance in black hole-qubit correspondence (see for example \cite{saniga_doily_gem}).

Here again we can construct a game based on the Mermin-Peres arrangement (see \cite{Kelleher_2_qubit_games} for more details). Suppose Charlie has an incidence structure of the doily with negative line structure as in Fig. \ref{fig:doily}. We also assume the lines and points of this doily are labelled. Charlie chooses a given point, and gives Alice and Bob distinct lines intersecting at that point. Alice and Bob provide triplets of $\pm1$ satisfying the positive/negative conditions of their lines. They win the game iff the triplet values agree on the chosen intersecting point.
\\~\\
As in the Mermin-Peres game, there is a quantum strategy formed by populating points with 2-qubit operators as in Fig. \ref{fig:doily}, and measuring on a shared entangled state. Like in the grid, this strategy wins with probability 1.
\\~\\
Due to the structure of the doily geometry, the classical success probability $\omega({\mathscr{D}})$ can be determined by analysing the best-case scenario for success. 
The degree of contextuality for the doily is 3, meaning there are always 3 lines that cannot be satisfied with any NCHV strategy. These lines can be associated with the negative lines \cite{henri_contextuality_2022}, and thus do not intersect. For a failure to occur, Charlie must choose one of the 9 points on these lines, and provide that line to one player. The other will necessarily get one of two satisfiable lines, also intersecting at that point. Then there is a $\frac{1}{3}$ chance that the inconsistency on the constraint line is at the chosen intersection point.
\\~\\
Thus, the probability of success for any classical strategy on the doily is

\begin{equation}\label{eq:doily_classical_bound}
 \omega({\mathscr{D}})=1 - \frac{9}{15}\times\frac{2}{3}\times\frac{1}{3}=\frac{13}{15}\approx 0.8667
\end{equation}

This provides a quantum advantage of $\frac{2}{15}$, larger than for the Mermin-Peres game. For this reason, the doily game has also been implemented on a noisy quantum computer using multiple methods \cite{Kelleher_2_qubit_games}.

The above shows that by analysing differing geometries with constraints arising from operator assignments, one can produce a quantum game with a more robust advantage versus the classical counterpart. In Section \ref{sec:gq(2,4)} we find a game with a maximal advantage for geometries with three points per line.
\\~\\
The previous discussion leads to the following lemma, that will be used in Section \ref{sec:gq_24_game}:
\begin{lemma} For a game corresponding to an arrangement of 3 points per line, a situation where one player has a satisfiable context, and the other an unsatisfiable context\footnote{Here ``unsatisfiable context" means a context that cannot be satisfied by a given classical assignment of the points.} results in at best a 2 in 3 chance of winning using a classical strategy.
\end{lemma}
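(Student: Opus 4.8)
The plan is to reduce the statement to a parity (product-of-signs) count on the three points of the unsatisfiable line. First I fix the ambient data: an arrangement $\mathcal{A}$ with exactly three points on every line and a sign $\epsilon(\ell)\in\{+1,-1\}$ attached to each line $\ell$, recording whether the product of the associated operators is $+I$ or $-I$. A classical (non-contextual) strategy is encoded by a global assignment $f$ of values in $\{+1,-1\}$ to the points: on a \emph{satisfiable} context $\ell$, i.e.\ one with $\prod_{x\in\ell}f(x)=\epsilon(\ell)$, the player reports $f|_\ell$, which is a legal answer; on an \emph{unsatisfiable} context, i.e.\ one with $\prod_{x\in\ell}f(x)\neq\epsilon(\ell)$, the player is forced to report some other legal triple $g$ with $\prod_{x\in\ell}g(x)=\epsilon(\ell)$. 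In the situation of the Lemma, Alice holds a satisfiable line $\ell_A$ and Bob an unsatisfiable line $\ell_B$, meeting in the point $p=\ell_A\cap\ell_B$; Alice reports $f(p)$ there, Bob reports $g(p)$, and by the winning rule the players win exactly when $f(p)=g(p)$.

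The heart of the argument is then a one-line computation. Setting $h(x)=f(x)g(x)$ for the three points $x\in\ell_B$,
\[
\prod_{x\in\ell_B}h(x)=\Bigl(\prod_{x\in\ell_B}f(x)\Bigr)\Bigl(\prod_{x\in\ell_B}g(x)\Bigr)=\bigl(-\epsilon(\ell_B)\bigr)\,\epsilon(\ell_B)=-1,
\]
using that $\prod_{x\in\ell_B}f(x)=-\epsilon(\ell_B)$ precisely because $\ell_B$ is unsatisfiable. Hence $h$ equals $-1$ on an odd number of the three points of $\ell_B$, that is on exactly one of them or on all three, so $h=+1$ on \emph{at most two} of the three. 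The meeting point $p$ is one of these three points, drawn uniformly by the referee as in the classical-bound computations for the magic square and the doily; since the players win iff $h(p)=+1$, the conditional probability of winning in this situation is at most $2/3$.

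Finally I would record that the bound $2/3$ is attained, so that ``at best'' is sharp: Bob's cheapest repair flips $f$ at a single point $q\in\ell_B$ (one flip already toggles the product, hence restores the required sign), after which $h=+1$ on the other two points and the players win with probability exactly $2/3$. Nothing here is computationally heavy; the one place that needs care, the main obstacle such as it is, is the bookkeeping of the first paragraph: making precise that a classical strategy is a global assignment together with forced per-line repairs, and checking that in the stated situation Alice gains nothing by deviating from $f$ on her satisfiable line, since such a deviation either leaves her value at $p$ unchanged or flips it, and neither raises the win probability above $2/3$. Once that framework is pinned down, the parity count does the rest.
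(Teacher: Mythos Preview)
Your proposal is correct and follows essentially the same route as the paper: Bob, holding the unsatisfiable line, must deviate from the shared assignment $f$ in at least one of three positions, and since the intersection point is (from his viewpoint) equally likely to be any of the three, the failure probability is at least $1/3$. Your version is more formal---the parity computation $\prod_{x\in\ell_B}h(x)=-1$ makes the ``at least one disagreement'' step explicit and cleanly covers the three-flip case, and you additionally verify tightness and address the possibility of Alice deviating---but the underlying idea is identical to the paper's two-sentence argument.
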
\label{lemma:neg_line_two_thirds}
\begin{proof}
    We begin by noting that Alice and Bob are not aware of each other's lines, but coordinate based on a classical strategy that agrees for all satisfiable contexts only. Suppose without loss of generality that Charlie gives Alice a satisfiable line A and Bob an unsatisfiable line B. Alice records her answers based on the pre-agreed strategy. Bob, in order to obtain the line parity condition of B, provides an answer that disagrees with the strategy in at least one position. There is then at least a 1 in 3 chance that this disagreement happens on the intersection point with A.
\end{proof}

\section{Symplectic Polar Space}\label{sec:symplectic_polar_space}

The geometrical structures under discussion here can in general be generated from projective geometrical methods, see in particular \cite{SPPH} and \cite{LHS} for an in-depth analysis of the 2- and 3-qubit cases. 

In particular, here we look at arrangements of 3-qubit operators. The full set of $N$-qubit operators and contexts form the arrangement known as the symplectic polar space $\mathcal{W}(2N-1,2)$, containing $2^{2N}-1$ points. In particular, the doily is the set $\mathcal{W}(3,2)$, with 15 points and 15 lines. 
For 3 qubits, one has $\mathcal{W}(5,2)$, containing 63 points/observables and 315 lines/contexts, each context containing 3 points. Every point can be identified with a 3-qubit operator of the form
\begin{equation}
    \mathcal{O}_{i} = sX^{\mu_{1}}Z^{\nu_{1}}\otimes X^{\mu_{2}}Z^{\nu_{2}}\otimes X^{\mu_{3}}Z^{\nu_{3}}
\end{equation}
where the $\mu_{i},\nu_{i}$ are over the finite field of two elements $\mathbb{F}_{2}$ and $s$ is a phase factor in the set $\{1,-1,i,-i\}$.

This allows us to map non-trivial operators in the space $\mathcal{P}$ to vectors in the projective space $PG(5,2)$ (up to phase factors):

\begin{equation}
\begin{array}{rl}
    \mathcal{P} & \rightarrow PG(5,2) \\
    \mathcal{O}_{p} & \mapsto p = [\mu_{1}:\mu_{2}:\mu_{3}:\nu_{1}:\nu_{2}:\nu_{3}]
\end{array}
\end{equation}

The commutation relations of operators along lines is equivalent to the vanishing of the following symplectic form defined over $\mathbb{F}_{2}$:

\begin{equation}
    \langle p, q \rangle := \sum_{i=1}^3 \mu_{i}\nu_{i}' + \nu_{i}\mu_{i}'
\end{equation}
for $p \in PG(5,2)$ without primed components and $q \in PG(5,2)$ with primed components. In other words, $\mathcal{W}(5,2)$ is nothing more than $PG(5,2)$ equipped with this symplectic form.
\\
One also has the following quadric on $\mathcal{W}(5,2)$:
\begin{equation}
    Q_{0}(p) := \sum_{i=1}^{3}\mu_{i}\nu_{i}
\end{equation}
For a given $p \in \mathcal{W}(5,2)$, we say that the corresponding operator $\mathcal{O}_{p}$ is \textit{skew} if it has an odd number of $Y$ matrices in its tensor product, and \textit{even} otherwise. The condition on skew- and evenness is given by the quadric:
\begin{equation}
Q_{0}(p) = 
\left\{
    \begin{array}{ll}
        0 & \text{ $p$ even}\\
        1 & \text{ $p$ skew}
    \end{array}
\right.
\end{equation}
One can easily see that
\begin{equation}
    \langle p, q \rangle = Q_{0}(p+q) + Q_{0}(p) + Q_{0}(q)
\end{equation}
And one can finally define the quadratic form
\begin{equation}
    Q_{q}(p) := Q_{0}(p) + \langle p, q \rangle^2
\end{equation}
where, as we are over $\mathbb{F}_{2}$, we can omit the square.
\\
Within $\mathcal{W}(5,2)$, one can identify different quadrics of interest, namely the \textit{elliptic} and \textit{hyperbolic} quadrics $E_{q}$ and $H_{q}$, respectively:
\begin{equation}
\begin{array}{rl}
        E_{q} & := \{p \in \mathcal{W}(5,2), Q_{0}(q) = 1, Q_{q}(p) = 0\} \\
        H_{q} & := \{p \in \mathcal{W}(5,2), Q_{0}(q) = 0, Q_{q}(p) = 0\} 
\end{array}
\end{equation}
In both quadrics $p$ are the points that are even and commute with $q$, or are skew and anti-commute with $q$. $E_{q}$ is parameterised by a skew element, $H_{q}$ by an even one. There are 28 elliptic quadrics and 35 + 1 hyperbolics, as one can choose the point $q=III$ for $H_{q}$. 

\subsection{The Eloily: The Elliptic Quadric $E_{q}$}\label{sec:gq(2,4)}

To calculate a specific example, let us consider $E_{YYY}$ and $H_{III}$, which form the ``canonical" labelling of these quadrics. The intersection of these form a doily $D_{YYY}$ made of the 15 symmetric elements that commute with $YYY$.
\\~\\
These quadrics $E_{YYY}$, $H_{III}$ form part of a structure known as the Magic Veldkamp Line which is of interest to quantum information mathematicians \cite{SPPH, LHS, VL, SZ, LS}. Various properties of the Veldkamp line have been studied, in particular we are interested in the ``eloily" subgeometry formed by the canonical elliptic quadric $E_{YYY}$ (see \cite{saniga_veldkamp_GQ_2010, blunck_GQ_24_veldkamp} for more detail). This subgeometry contains no generalised triangles, and so is also known as $GQ(2,4)$, the generalised quadrangle of $4+1$ lines through each point, and $2+1$ points per line. It has $15+12=27$ points and $15+30 = 45$ lines, formed from a doily and ``double-six" arrangement, and a degree of contextuality of 9 \cite{henri_contextuality_2022}. This can be seen as it can also be composed of 3 disjoint doilies, each of which have degree of contextuality 3. A representation of the eloily is given in Fig \ref{fig:gq_24}.
\\
\begin{figure}[!h]
\centering
 \includegraphics[width=6cm]{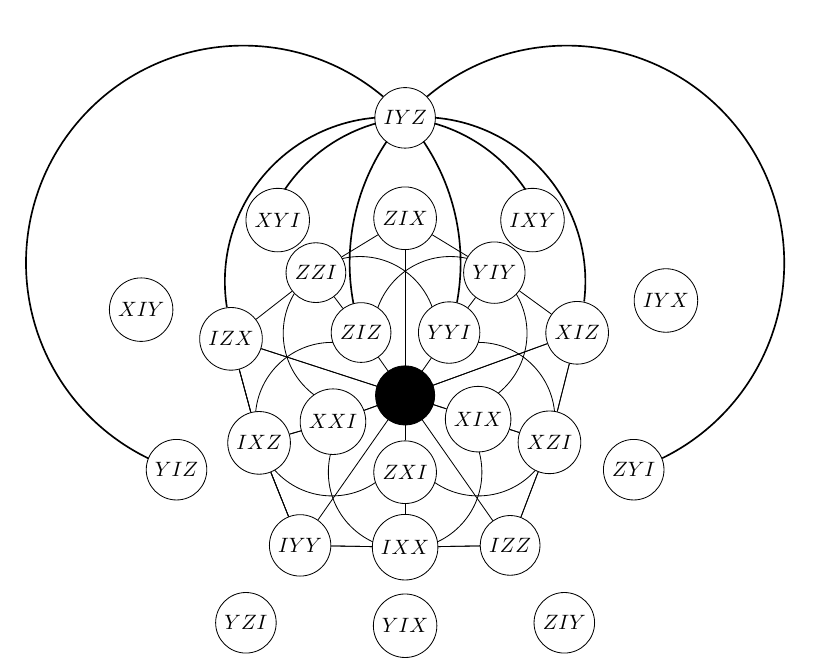}
 \includegraphics[width=7cm]{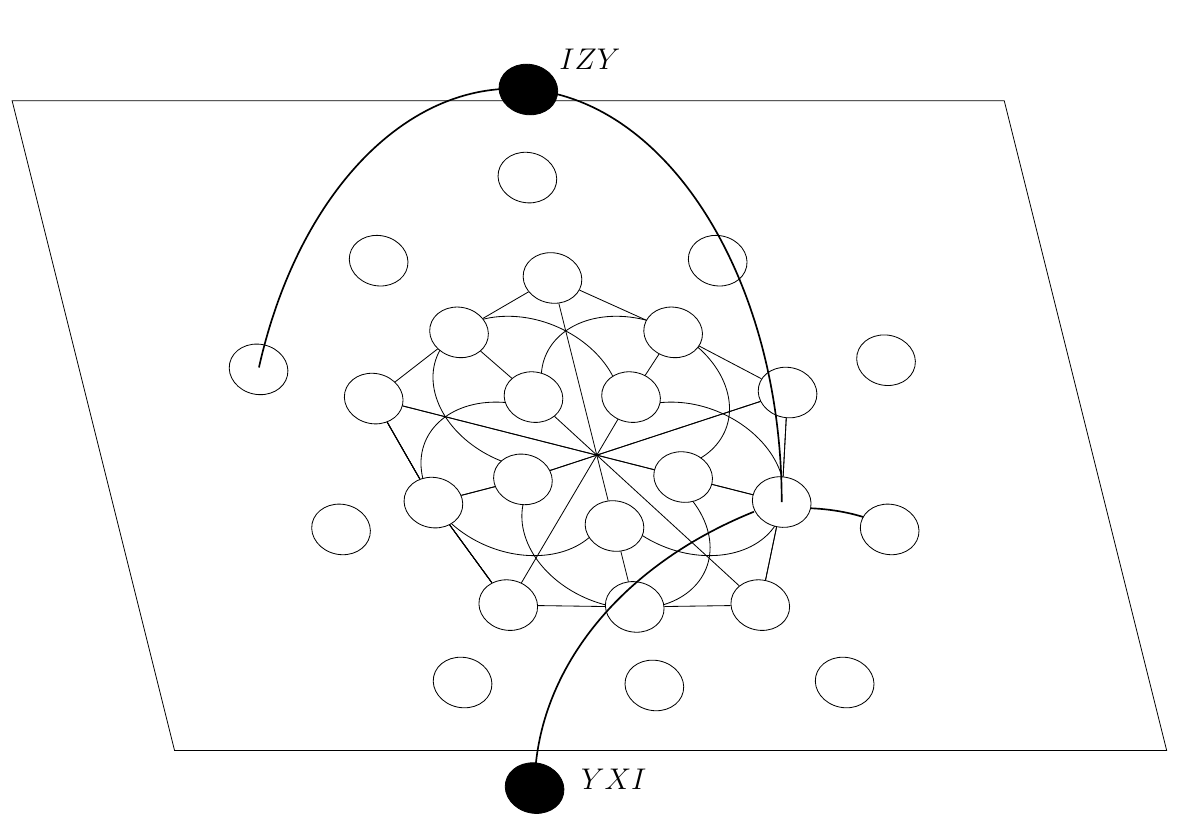}
 \caption{A visual representation of the eloily $E_{q}$, a.k.a. $GQ(2,4)$. It is formed by 15 points of a doily $D_{q}$, with a ``double-six" 12 points from $E_{q}\backslash D_{q}$. The double-six intersect the doily with representative lines as indicated in the two images. The particular qubit labelling comes from the canonical choice of $q=YYY$. For original image see Polster 1998 \cite{Polster98}. See also \cite{LSVP,blunck_GQ_24_veldkamp}.}\label{fig:gq_24}
\end{figure}

What makes the geometry of the eloily interesting is the structure of the negative lines in this canonical form. All forms of $E_{q}$ for any chosen $q$ share the property that every point is on 5 lines, and every line contains 3 points.
However in the canonical form $q=YYY$, one has the property that there are 9 non-intersecting negative lines, with every point lying on exactly one such line (referred to as a \textit{spread} of lines, see Appendix \ref{app:gq24_labelling} for details). One can use this spread to show the following property of the eloily:
\begin{proposition}\label{thm:spread}
    For any given classical strategy, each point $p$ on the eloily lies on exactly one unsatisfiable line.
\end{proposition}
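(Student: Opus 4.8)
The plan is to exploit the spread of nine pairwise skew negative lines exhibited in Appendix~\ref{app:gq24_labelling}: these nine lines partition the $27$ points of the eloily, so each point lies on exactly one negative line and on four positive ones, and the product of the five line-signs through any point equals $-1$. I would first observe that a classical strategy may, as in the Mermin--Peres and doily cases, be presented by a single global assignment $f\colon\mathcal{P}\to\{+1,-1\}$ of eigenvalues to the $27$ points, a line $\ell=\{a,b,c\}$ of sign $\varepsilon_\ell$ being unsatisfiable precisely when $f(a)f(b)f(c)=-\varepsilon_\ell$. The all-$+1$ assignment makes the unsatisfiable set equal to the nine negative lines, hence equal to a spread; and since the degree of contextuality of the eloily is $9$, any strategy that is optimal for the game (equivalently, that minimises the number of unsatisfiable lines) has exactly nine unsatisfiable lines. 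The proposition then amounts to showing that these nine lines are always pairwise disjoint, for then they meet each of the $27$ points exactly once.

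To get at this I would use two elementary tools. First, multiplying the constraints of the five lines through a fixed point $p$: each of the ten points collinear with $p$ occurs exactly once (two distinct lines of a generalised quadrangle meet in at most one point), $f(p)$ occurs five times, and the product of the five line-signs is $-1$; writing $v_p$ for the number of unsatisfiable lines through $p$ one obtains $(-1)^{v_p}=-\prod_{q}f(q)$, the product running over the closed neighbourhood of $p$, so the parity of $v_p$ is already fixed by $f$. Second, flipping the single value $f(p)$ toggles satisfiability along the five lines through $p$ and leaves the others alone, so it changes the total number of unsatisfiable lines by $5-2v_p$; optimality forces $v_p\le 2$ for every $p$. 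Combined with $\sum_p v_p=3\cdot 9=27$ over the $27$ points, this leaves only two possibilities: either $v_p=1$ everywhere, or equally many points carry $v_p=2$ and $v_p=0$.

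Ruling out the second possibility is the main obstacle, and is where the specific geometry of the eloily must enter. One route is to use the presentation of the eloily as a union of three doilies, each of degree of contextuality $3$: one argues --- and this is the step that requires care --- that an optimal global $f$ restricts to an optimal assignment on each doily, so each doily contributes exactly its three (mutually skew) negative lines to the unsatisfiable set, and across the three doilies these assemble into the nine-line spread, forcing $v_p\equiv 1$. A more computational route is to classify the subsets $S=\{p:f(p)=-1\}$ for which exactly nine lines become unsatisfiable, using the coordinates of Appendix~\ref{app:gq24_labelling} together with the conditions ``$|S\cap\ell|$ odd on a positive line'' and ``$|S\cap\ell|$ even on a negative line'', and checking directly that in every such case the unsatisfiable lines stay disjoint. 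Either way, the delicate point is excluding configurations with a point on two unsatisfiable lines; the reduction to a global $f$, the parity identity, the flip bound $v_p\le 2$, and the counting are all routine. For the subsequent use of the proposition one may also add the small remark that a non-optimal strategy, having more than nine unsatisfiable lines, can only fare worse, so attention may be restricted to the optimal case.
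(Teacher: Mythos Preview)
Your approach differs substantially from the paper's. The paper's argument is a three-sentence invariance claim: the all-$+1$ assignment has as its unsatisfiable lines exactly the nine negative lines (the spread of Appendix~\ref{app:gq24_labelling}), and then it asserts that ``the contextuality of a geometry (and thus its set of unsatisfiable lines) is dependent only on the distribution of negative lines'', concluding that the unsatisfiable set is a spread for \emph{every} classical strategy. There is no parity computation, no flip argument, no counting.

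You are right to be more careful, and in fact you have put your finger on a point the paper glosses over. The literal statement of the proposition is false for arbitrary assignments --- flipping a single value from the all-$+1$ strategy already puts four unsatisfiable lines through one point and gives twelve unsatisfiable lines in total --- so an optimality hypothesis is needed, and the paper's invariance sentence does not supply one. Your parity identity $(-1)^{v_p}=-\prod_{q\in N[p]}f(q)$, the flip bound $v_p\le 2$, and the double count $\sum_p v_p=27$ are all correct and reduce the question to showing $n_0=n_2=0$; equivalently, since nine lines of size three cover $27$ points with equality, to showing that every point meets at least one of the nine unsatisfiable lines. That is precisely the content the paper's one-line invariance claim is meant to carry but does not justify. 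Of your two proposed routes, the three-doily decomposition is the more structural but needs its own argument that a global optimum restricts to three local optima (this is not automatic, since the three doilies share points); the enumeration over Appendix~\ref{app:gq24_labelling} is less elegant but would certainly close the gap. Your closing remark --- that non-optimal strategies can only fare worse, so one may restrict to the optimal case --- is exactly what the application to the classical-bound theorem needs and is a clarification the paper omits.
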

\begin{proof}
    Both Charlie and the players share an ``unlabelled" geometry that indicates only line parities, such that it can be consistently labelled by quantum operators by the players. As the canonical labelling satisfies this, we consider the geometry with a spread of negative lines.
    Consider the trivial classical strategy where all points are labelled $+1$. Then the spread of negative lines is exactly the set of lines that cannot be satisfied. As the contextuality of a geometry (and thus its set of unsatisfiable lines) is dependent only on the distribution of negative lines, this shows that the set of unsatisfiable lines also forms a spread, regardless of any classical strategy that Alice and Bob may pick.
\end{proof}
The above property is unique to $GQ(2,4)$, as both the grid and doily contain points not incident with any unsatisfiable lines.
\\

\section{The Eloily Game}\label{sec:gq_24_game}

These properties allow us to form two versions of an eloily game $\mathscr{E}$: 
\begin{enumerate}[I]
    \item A standard 2-player $\mathscr{E}_{2}$ game, analogous to the Mermin \& doily games, where Charlie picks a point $p$ of $E_{YYY}$ and gives a random pair of incident lines to Alice and Bob. 
    \item A 4-player version $\mathscr{E}_{4}$ utilising the spread, where Charlie picks a point $p$ and gives one incident line each to the players Alice, Bob, Daisy and Evan.
\end{enumerate} 
In both scenarios, the players cannot determine each other's assigned lines, and must return strings of 3 elements picked from $\{+1,-1\}$. They may apply a classical strategy of populating the points of the eloily geometry with classical variables $\pm1$ as best as possible before play. However in both cases they will be left with 9 lines that cannot be satisfied, in which case they complete their responses based on their given line constraints. The winning conditions of each game are as follows, where $a_{p}$ etc. refer to the position of the player's triplet associated to the intersection point $p$:
\begin{equation}\label{eq:both_winning_condition}
    \text{Players win iff } \begin{cases}
        a_{p}\cdot b_{p} = 1, \qquad\qquad\;\, \text{I} \\
        a_{p}\cdot b_{p}\cdot d_{p}\cdot e_{p} = 1, \quad \text{II} 
    \end{cases}
\end{equation}
The 2-player condition \eqref{eq:both_winning_condition} (I) matches with that of the Mermin and doily games. The 4-player condition \eqref{eq:both_winning_condition} (II) derives from the entanglement condition on the shared state $\ket{\psi_{S,4}}$ (see Sec. \ref{sec:stabiliser_states}). For a classical labelling, provided all lines are satisfiable, both the above conditions will be met. 
\\~\\
One might wonder whether one could utilise the 5 lines through each point in the eloily geometry and form a 5-player game. This is immediately problematic due to the odd number of players. Given some winning strategy analogous to \eqref{eq:both_winning_condition}, one sees that the players must have an even number of negative measurements on the chosen intersection point $p$. But with a classical strategy they all choose the same labelling for the point $p$, and if that label were negative then the winning condition would not be met. Alternative games are likely possible to extend to odd-player games, but in this work we focus on winning conditions of the form \eqref{eq:both_winning_condition} and thus only on even-player games.
\\~\\
The eloily games $\mathscr{E}_{2}$ and $\mathscr{E}_{4}$ satisfy the following theorem:
\begin{theorem}
    The classical bounds for the eloily games are $\frac{13}{15}$ for the 2-player game $\mathscr{E}_{2}$ and $\frac{11}{15}$ for the 4-player game $\mathscr{E}_{4}$.
\end{theorem}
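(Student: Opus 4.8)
The plan is to compute the optimal classical success probability of each game directly, working---as the paper's framework dictates---with classical strategies given by a global $\pm1$-labelling of the $27$ points of the eloily. By Proposition~\ref{thm:spread}, for any such labelling the set of unsatisfiable contexts is a spread of $9$ lines: pairwise disjoint, with every point on exactly one of them. Combined with the incidence facts that every point of $GQ(2,4)$ lies on $5$ lines and every line contains $3$ points, this is all the geometry needed; the rest is conditioning on Charlie's choices and invoking Lemma~\ref{lemma:neg_line_two_thirds}. A player handed a satisfiable line reports the labelling; a player handed an unsatisfiable line must report a triplet differing from the labelling in an odd number of positions---optimally in exactly one---whose location, since the chosen point $p$ is unknown to her, is best fixed before play.

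For the two-player game $\mathscr{E}_{2}$ I would fix Charlie's point $p$, note that of the $5$ lines through it exactly one, $m_{p}$, is unsatisfiable, and split the $\binom{5}{2}=10$ unordered pairs of lines Charlie can hand out: $\binom{4}{2}=6$ are pairs of satisfiable lines (a guaranteed win, since both players then report the common value $v(p)$), $4$ pair $m_{p}$ with a satisfiable line, and none pair two unsatisfiable lines, because these form a spread. In the $4$ mixed cases Lemma~\ref{lemma:neg_line_two_thirds} caps the winning probability at $\tfrac23$, and this is attained: given $m_{p}$, the intersection point $p$ is uniformly distributed over its three points, so a single pre-fixed flip misses $p$ two times in three. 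Hence the conditional winning probability, for every $p$, is
\begin{equation*}
\frac{6}{10}\cdot 1+\frac{4}{10}\cdot\frac23=\frac{13}{15},
\end{equation*}
so $\omega(\mathscr{E}_{2})=\tfrac{13}{15}$, which coincides with the doily bound \eqref{eq:doily_classical_bound}.

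For the four-player game $\mathscr{E}_{4}$ the same bookkeeping applies, except that handing out $4$ of the $5$ lines through $p$ amounts to omitting one of the five uniformly at random. With probability $\tfrac15$ the omitted line is $m_{p}$, in which case all four players sit on satisfiable lines and report $v(p)$, winning because the product of four copies of the same sign is $+1$---this is the winning condition \eqref{eq:both_winning_condition}(II). With probability $\tfrac45$ the omitted line is satisfiable, so $m_{p}$ goes to one player while the other three are pinned to the parities of their satisfiable lines; the analysis of Lemma~\ref{lemma:neg_line_two_thirds} then applies and gives at most $\tfrac23$. This yields, for every $p$,
\begin{equation*}
\frac15\cdot 1+\frac45\cdot\frac23=\frac{11}{15},
\end{equation*}
whence $\omega(\mathscr{E}_{4})=\tfrac{11}{15}$.

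The routine parts are these two counts and the observation that, given a line, the intersection point is uniform over its three points (transitivity of the relevant symmetries, or a direct tally of Charlie's choices). The step I expect to be genuinely delicate is the \emph{upper-bound} half for $\mathscr{E}_{4}$: one must rule out that the three players who do not hold the unsatisfiable line coordinate so as to beat $\tfrac23$ in the bad case. Since the winning condition only records the parity of the number of players who flip their value at $p$, a win survives if a second player also flips at $p$; but any such player can only do this by flipping an even number of positions of her own line (to preserve its parity), and with $p$ secret and uniformly distributed no pre-agreed rule can correlate this even-weight flip with the single forced flip of the holder of $m_{p}$ tightly enough to beat $\tfrac23$---generic choices do strictly worse. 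Making this rigorous (equivalently, carrying out the standard reduction that an optimal classical strategy may be taken to come from a point-labelling) is where the real work of the proof lies; granting it, the theorem is exactly the two displayed computations.
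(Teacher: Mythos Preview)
Your proof is correct and follows essentially the same route as the paper: invoke Proposition~\ref{thm:spread} to get exactly one unsatisfiable line through each point, count the fraction of Charlie's line-assignments that include it ($4/\binom{5}{2}$ for $\mathscr{E}_2$, $4/\binom{5}{4}$ for $\mathscr{E}_4$), and apply Lemma~\ref{lemma:neg_line_two_thirds} for the $\tfrac13$ failure rate in those cases. If anything you are more scrupulous than the paper, which simply cites the two-player Lemma~\ref{lemma:neg_line_two_thirds} for $\mathscr{E}_4$ without the discussion you flag about whether the three additional players could coordinate to beat $\tfrac23$.
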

\begin{proof}
    For any point in the eloily and any classical strategy, there are 4 satisfiable lines and 1 unsatisfiable line intersecting at that point, per Theorem \ref{thm:spread}. Charlie wants to choose the unsatisfiable lines to maximise the odds of failure, but is blind to the specific classical strategy chosen by the players, so is unaware which lines are unsatisfiable. We will examine conditions for a failed game, and consider the standard game $\mathscr{E}_{2}$ first:
    
    Charlie has, for his chosen point, ${5\choose 2}$ ways of assigning two random lines to the players, up to swapping players. There are ${4 \choose 1}$ such choices where one player gets an unsatisfiable line, which carries with it a $\frac{1}{3}$ chance of failure as per Lemma \ref{lemma:neg_line_two_thirds}, giving a classical success rate of

    \begin{equation}
    \begin{array}{rl}
        \omega(\mathscr{E}_{2}) &= 1 - \frac{{4 \choose 1}}{{5 \choose 2}}\cdot\frac{1}{3} \\
        &= \frac{13}{15}
    \end{array}
    \end{equation}

    For the 4-player game $\mathscr{E}_{4}$, Charlie has ${5 \choose 4}$ ways of assigning 4 lines, with again ${4 \choose 1}$ such choices including an unsatisfiable line. Thus the success rate is 
    \begin{equation}
    \begin{array}{rl}
        \omega(\mathscr{E}_{4}) &= 1 - \frac{{4 \choose 1}}{{5 \choose 4}}\cdot\frac{1}{3} \\
        &= \frac{11}{15}
    \end{array}
    \end{equation}
\end{proof}

As is usual for these games, we then move to the alternative picture where we allow the players to use quantum information. Here, they share 3 pairs of entangled qubits, in such a state that they can perform measurements based on the operator-assigned points of $E_{YYY}$ and always be in agreement in their measurement outcomes. This provides the same quantum strategy success of 1 as in the Mermin and doily games. Thus in allowing 3-qubit shared entanglement between the players, and utilising the spread in $E_{YYY}$, we have constructed a game ($\mathscr{E}_{4}$) with a quantum advantage of $\frac{4}{15} = 0.2\overline{6}$.

In the next section we will describe the implementation of this game in a quantum circuit, including the construction of the shared state via stabiliser states. We begin with the 2-player $\mathcal{E}_{2}$ game before tweaking for the 4-player one.

\subsection{Implementing $\mathcal{E}_{2}$ on a Quantum Computer}\label{sec:implementation}

The implementation of this game is not immediately obvious as one must first understand what state to set the 3 entangled qubit pairs to. \\
For the Mermin-Peres and doily games with 2 qubits, the pairs of shared qubits are given by two separable Bell pairs (see \cite{Kelleher_2_qubit_games} for full discussion on implementing these two games):
\begin{equation}
    |\psi_{2Q}\rangle = \left(\frac{1}{\sqrt{2}}(|00\rangle+|11\rangle)\right)\otimes\left(\frac{1}{\sqrt{2}}(|00\rangle+|11\rangle)\right)
\end{equation}
where in each of the entangled qubit pairs in the above state, the ket $|AB\rangle$ indicates the state of Alice's qubit on the left and Bob's on the right.
\\~\\
The required (2-player) property of any initial shared state $|\psi_{2Q}\rangle$ is that any shared operator that Alice and Bob act on it give the same measurement outcome, $+1$ or $-1$. For any 2-qubit operator $\mathcal{O}_{i}$ that represents the intersection point of whatever arrangement they are playing on, we thus need
\begin{equation}\label{eq:stabiliser_states_req}
    \mathcal{O}_{i}\otimes\mathcal{O}_{i}|\psi_{2Q}\rangle = |\psi_{2Q}\rangle
\end{equation}
The above state satisfies this for even 2-qubit operators (ones with an even number of $Y$s), however for skew operators we pick up an overall phase factor of $-1$, giving
\begin{equation}
    \mathcal{O}_{\textit{skew}}\otimes\mathcal{O}_{\textit{skew}}|\psi_{2Q}\rangle = -|\psi_{2Q}\rangle
\end{equation}
Thus, the additional rule is incorporated where if Alice has skew operators in her given line, she simply flips the results of her measurements for those operators. This gives an agreed measurement value for the intersection point.

\subsection{Stabiliser States}\label{sec:stabiliser_states}
For 3 qubits, one can play without this skew-flipping rule, for both 2- and 4-player games. To do so, we make use of stabiliser states (see \cite{nielson_chuang}, Chpt. 10). We still require \eqref{eq:stabiliser_states_req} for 3 qubit operators, for some initial state $|\psi_{3Q}\rangle$.
\\~\\
To find a suitable $|\psi_{3Q}\rangle$, one makes use of the following definition of a stabiliser state:
\begin{equation}\label{eq:stabiliser_state_def}
    |\psi_{S,2}\rangle := \prod_{i}\frac{1}{2}\left(\mathbbm{1}+\mathcal{O}_{i}\otimes\mathcal{O}_{i}\right)|\chi\rangle
\end{equation}
for nontrivial 3-qubit operators $\mathcal{O}_{i}$ and any given state $|\chi\rangle$. If such a state $|\psi_{S,2}\rangle$ exists, it enjoys the property
\begin{equation}
    \mathcal{O}_{i}\otimes\mathcal{O}_{i}|\psi_{S,2}\rangle = |\psi_{S,2}\rangle
\end{equation}
for any $\mathcal{O}_{i}$, as the operators square to identity. This is the required condition on our initial state for our 2-player game, so we simply need a nontrivial $|\psi_{3Q}\rangle = |\psi_{S,2}\rangle$. 

To find a nontrivial $|\psi_{S,2}\rangle$, one can simply iterate over different $|\chi\rangle$ states and apply \eqref{eq:stabiliser_state_def} and \eqref{eq:stabiliser_state_4_player} until one of each is found. The state below is one such example:
\begin{equation}\label{eq:stabiliser_state_ex_2_player}
    |\psi_{S,2}\rangle = \left(\frac{1}{\sqrt{2}}\left(|01\rangle-|10\rangle\right)\right)\otimes\left(\frac{1}{\sqrt{2}}\left(|01\rangle-|10\rangle\right)\right)\otimes\left(\frac{1}{\sqrt{2}}\left(-|01\rangle+|10\rangle\right)\right)
\end{equation}

This state can be implemented on an IBM Qiskit\cite{ibmq,qiskit} circuit, as shown in Fig. \ref{fig:circuit_gq24}.

\begin{figure}[!h]
\centering
\includegraphics[width=0.7\textwidth]{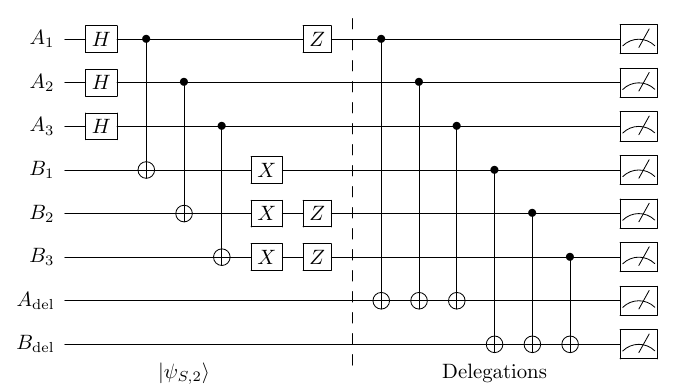}
\caption{Circuit implementing the entangled quantum state $\ket{\psi_{S,2}}$ (left), and the delegation method (right) for multiple operator measurements. Basis-change gates have been suppressed for clarity.}
\label{fig:circuit_gq24}
\end{figure}
We implement this state into a circuit and play the game. As any of the quantum games considered here involve Alice and Bob (and possibly Daisy and Evan) performing multiple measurements on an individual state, one can perform this using relevant basis transformations or via delegations onto auxiliary qubits. Here we implement this using the latter method, see \cite{Kelleher_2_qubit_games} for more details. 
\\~\\
The quantum game proceeds as follows:
\begin{enumerate}
    \item The shared state $|\psi_{S, 2}\rangle$ is implemented in the quantum circuit, with each player getting three ``register" qubits and one ``delegation" qubit (Fig. \ref{fig:circuit_gq24}).
    \item For a given point $p$ in $E_{YYY}$, Charlie selects some pair of contexts intersecting at $p$ and provides one to Alice and one to Bob.
    \item For each player, they look at the first 3-qubit operator on their context. They make relevant basis changes on their register qubits $\{A_{i},B_{i}\}$ and perform a CNOT operation targeted onto the separate delegation qubit. The state of this qubit is measured and recorded classically. Inverse basis-change gates are implemented to restore registers to original state.
    \item They perform new basis change gates based on second operators, and measure directly on their registers. They each combine the classical recordings into a single bit via bit addition.
    \item They complete their 3-bit response based on their context constraints (negative contexts must have an odd number of 1s, positive even).
    \item They send their bit strings to Charlie, who compares them. If the winning condition \eqref{eq:both_winning_condition} (I) is met, they win the game.
\end{enumerate}
The CNOT relations between registers and delegation qubits are given graphically in Fig. \ref{fig:gq24_connections}. As we need 2 extra delegation qubits, and 6 to start with, we have a total of 8 qubits in our circuit. However, based on CNOT topology, in general more qubits will be used in the physical circuit (see Fig. \ref{fig:brisbane_topology}). 

\subsection{Implementing $\mathcal{E}_{4}$ on a Quantum Circuit}
The 4-player game is very similar to $\mathcal{E}_{2}$, with differences highlighted in this section. Here, Charlie chooses a point $p$ in the geometry and provides a distinct incident line to each player Alice, Bob, Daisy and Evan. The players make measurements on a shared quantum state according to their quantum strategy, and respond to Charlie with triplets that must satisfy \eqref{eq:both_winning_condition} (II) in order to win.

First, one tweaks \eqref{eq:stabiliser_state_def} in the obvious way to find a valid condition on the shared quantum state for 4 players:
\begin{equation}\label{eq:stabiliser_state_4_player}
    \begin{array}{rl}
        |\psi_{S,4}\rangle &:= \prod_{i}\frac{1}{2}\left(\mathbbm{1}+\mathcal{O}_{i}\otimes\mathcal{O}_{i}\otimes\mathcal{O}_{i}\otimes\mathcal{O}_{i}\right)|\chi\rangle \\
        &\Rightarrow \mathcal{O}_{i}\otimes\mathcal{O}_{i}\otimes\mathcal{O}_{i}\otimes\mathcal{O}_{i}|\psi_{S,4}\rangle = |\psi_{S,4}\rangle
    \end{array}
\end{equation}
\\
To find such a $\ket{\psi_{S,4}}$, one again searches over various $\ket{\chi}$. An example of such a stabiliser state is 3 copies of the following GHZ state, entangled across the four players:
\begin{equation}\label{eq:stabiliser_state_ex_phi}
    |GHZ_{4}\rangle = \frac{1}{\sqrt{2}}\left(|0000\rangle+|1111\rangle\right)
\end{equation}
where the ket $\ket{ABDE}$ indicates that the first qubit is Alice's, etc.
\begin{equation}\label{eq:stabiliser_state_ex_4_player}
\begin{array}{rl}
    |\psi_{S,4}\rangle &= \ket{GHZ_{4}}^{\otimes 3} \\
    = \frac{1}{2\sqrt{2}}&(\ket{000000000000} + \ket{001001001001}
    + \ket{010010010010} + \ket{011011011011} \\
    &+ \ket{100100100100} + \ket{101101101101}
    + \ket{110110110110} + \ket{111111111111})
    \end{array}
\end{equation}
where now the first three qubits are assigned to Alice, the next 3 to Bob, etc. 
\\~\\
To implement this into a quantum circuit in analogy with Fig. \ref{fig:circuit_gq24}, one puts Alice's register qubits into superposition with $H$ gates, as in the $\mathcal{E}_{2}$ example, and then applies CNOTs between Alice's register and each of Bob's, Daisy's and Evan's to entangle qubits 1, 2 and 3 across the 4 players.

Then, as per the $\mathcal{E}_{2}$ game, delegations are made to a delegation qubit per player, based on the first operator in each player's context. These delegation qubits are measured and the results stored classically to give each players first triplet entry. Basis changes are made to the register qubits and direct measurements are made on those according to the second operator on each player context, and combined to give the second triplet entries. The third are computed from the context parities. The players then win iff \eqref{eq:both_winning_condition} (II) is satisfied.

\section{Results \& Conclusion}\label{sec:results}
This circuit was implemented on the 127-qubit \texttt{ibm\_brisbane} backend, and looped over all points in $E_{YYY}$ and all permitted line choices for the players. Each circuit was run with a total $8192$ shots, giving $8192 \times 27 \times 10 = 2,211,840$ data points for $\mathcal{E}_{2}$ and $8192 \times 27 \times 5 = 1,105,920$ for $\mathcal{E}_{4}$. The results are given in Table \ref{tab:gq_results}, with simulated results for both and real quantum backend results for $\mathcal{E}_{2}$. The $E_{YYY}$ geometry also contains $120$ grids and $36$ doilies as subgeometries. The results for playing associated games were extracted from the $\mathscr{E}_{2}$ results data, and the best performing grid $\mathscr{M}$ and doily $\mathscr{D}$ results are also shown in Table \ref{tab:gq_results}. These subgeometry results are an improvement upon the results from \cite{Kelleher_2_qubit_games}. The particular grid and doily are shown in Fig. \ref{fig:grid_doily_best}.
\\

\begin{table}[h!]
\centering
\begin{tabularx}{0.781\textwidth}{|c||ccc||l|} \hline 
     \qquad \newline \text{Game $\mathscr{G}$} & \qquad \newline Noiseless Simulation & Noisy Simulation &  \qquad \newline $\sigma(\mathscr{G})$ & \qquad \newline $\omega(\mathscr{G})$ \\ \hline \hline
     $\mathscr{M}$ & 1.0 & $0.93490$ & ${0.87904}$ & $0.\overline{8}$ \\
     $\mathscr{D}$ & 1.0 & $0.93891$ & ${0.84230}$ & $0.8\overline{6}$ \\
    $\mathscr{E}_{2}$ & $1.0$ & $0.94559$ & ${0.82137}$ & $0.8\overline{6}$ \\ 
    $\mathscr{E}_{4}$ & $1.0$ & $0.83726$ & $\sim0.5$ & $0.7\overline{3}$ \\ \hline
    
\end{tabularx}
\caption{Results for computing the average success rate $\sigma(\mathscr{G})$ on the IBM ``Brisbane" NISQ Computer for the grid ($\mathscr{M}$), doily ($\mathscr{D}$) and eloily ($\mathscr{E}_{2}, \mathscr{E}_{4}$) games. Shown alongside are classical bounds $\omega(\mathscr{G})$ and results for simulators with and without noise models, however note that the noise models do not consider circuit transpilations due to CNOT topology (see Section \ref{sec:transpilation_noise}).} 
\label{tab:gq_results}
\end{table}

\begin{figure}[!h]
\centering
\begin{subfigure}[]{.4\textwidth}
  \centering
  \includegraphics[width=0.7\linewidth]{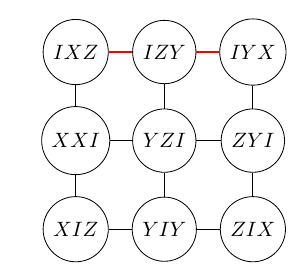}
  \label{fig:grid_best}
\end{subfigure}
\qquad
\begin{subfigure}[]{.4\textwidth}
  \centering
 \includegraphics[width=0.9\linewidth]{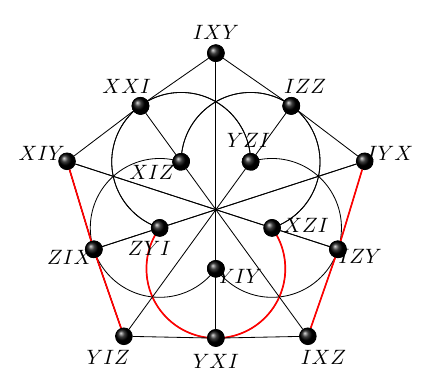}
  \label{fig:doily_best}
  \end{subfigure}
 \caption{Best performing grid (left) and doily (right) subgeometries from the eloily results. Red lines indicate negative contexts.}\label{fig:grid_doily_best}
\end{figure}

Alice and Bob clearly beat the bound in all simulations, both with and without noise models. However, on the actual quantum backends they were unsuccessful in beating the classical bounds. This is due to transpilation-induced noise arising from the circuit adding additional CNOT gates in order to implement CNOTs between qubits not directly connected in the topology (see Sec. \ref{sec:transpilation_noise}). The ``smaller" games of the Mermin grid and doily come closest to beating the bound, which is unsurprising as there are many copies of these geometries within $GQ(2,4)$, and so the best performing copies can be highlighted. It is interesting to note that the 2-player eloily game has the same classical bound as the doily game, likely from the fact that each intersecting pair of lines in $GQ(2,4)$ live in some doily as a subgeometry. Finally, it is worth pointing out that the spread of unsatisfiable lines has afforded us a lower bound by using 4 players and increasing the chances of Charlie choosing such a line. An experimental demonstration of the quantum advantage was not shown for the $\mathcal{E}_{4}$ game only due to the CNOT topology of the IBM quantum computers (Sec. \ref{sec:transpilation_noise}), however were some other backend to become available with a more interconnected qubit structure, such a game could be used to demonstrate the lack of NCHV models underlying the system.

\subsection{Transpilation-Induced Noise}\label{sec:transpilation_noise}

Noise models in the IBM Quantum Experience ignore topology-induced transpilation, and any additional CNOT gates introduced thereof. Thus the noisy simulator results differ from the actual results $\sigma(\mathscr{G})$ in Table \ref{tab:gq_results}.

Because of the CNOT gates required between both Alice and the other player's qubits, and between these qubits and the delegation qubits (Fig. \ref{fig:circuit_gq24}), it is not possible to design the circuit such that every CNOT is between directly connected qubits in the topology (see Fig. \ref{fig:brisbane_topology_and_mapping}). A mapping was chosen that relied on additional qubits symmetrically for both players, however possible other mappings may provide an even lower induced noise level. In particular, the mapping used in this work incurred a high ``circuit qubit-distance" between connected qubits $A_{2}$, $B_{2}$ and between $A_{\text{del}}$, $A_{3}$, and Bob's counterparts.
\begin{figure}[!h]
\centering
\begin{subfigure}[b]{.4\textwidth}
  \centering
  \includegraphics[width=\linewidth]{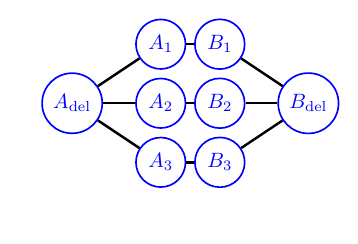}
  \caption{$\mathcal{E}_{2}$ game CNOT connections.}
  \label{fig:gq24_connections}
\end{subfigure}
\qquad
\begin{subfigure}[b]{.4\textwidth}
  \centering
 \includegraphics[width=\linewidth]{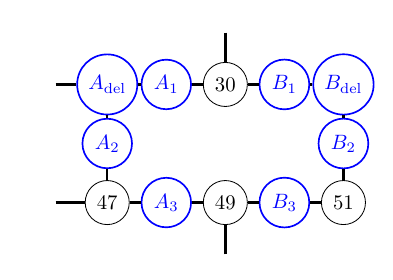}
 \caption{IBM ``Brisbane" CNOT topology.}
  \label{fig:brisbane_topology}
  \end{subfigure}
 \caption{Mapping the required qubit CNOT connections for the 2-player eloily game (left) onto a section of the IBM ``Brisbane" topology (right). Nodes represent qubits, and edges represent direct CNOT connections. Alice is assigned qubits $\{A_{1},A_{2},A_{3}\} = \{29, 35, 48\}$ and delegation qubit $A_{\text{del}} = 28$. Bob is assigned qubits $\{B_{1},B_{2},B_{3}\} = \{31,36,50\}$ and delegation qubit $B_{\text{del}}=32$, all indicated in blue. Qubits $\{30,47,49,51\}$ in black are auxiliary qubits used for additional CNOT gates in transpiled circuit. For 4-player game, additional players Daisy and Evan are given register and delegation qubits, connected to Alice's in the same manner as Bob's.}\label{fig:brisbane_topology_and_mapping}
\end{figure}

\section{Three-qubit invariants and the Cabello inequality}\label{sec:quadrangles}
We finish this article with a surprising connection between a well-known inequality quantifying quantum contextuality, and an invariant of the geometries described here. The point-line incidence structures we presented in this paper are all generalized quadrangles\cite{distance_graphs}:
\begin{definition}
 A \textit{generalized quadrangle} $GQ(s,t)$ is a point-line geometry with the property that any line contains $s+1$ points and any point belongs to $t+1$ lines and such that the configuration is triangle-free.
\end{definition}
\begin{example}
The Mermin-square has the incidence structure of a grid, i.e. $GQ(2,1)$, the doily corresponds to $GQ(2,2)$ and the eloily is $GQ(2,4)$ as previously stated. These are the only nontrivial quadrangles for $s=2$.
\end{example}

The canonical labelling chooses the 27 points of $\mathcal{W}(5,2)$ that contain exactly one identity operator (see App. \ref{app:gq24_labelling}), and through the position of the identity can be split in to three disjoint Mermin grids, given below in matrix form:
\\~\\
\begin{equation}\label{eq:gq24_subgrids}
    \mathbb{A} = 
    \begin{bmatrix}
    XYI&ZXI&YZI\\
    ZZI&YYI&XXI\\
    YXI&XZI&ZYI
    \end{bmatrix}, \;
    \mathbb{B} = 
    \begin{bmatrix}
    YIX&XIZ&ZIY\\
    ZIZ&YIY&XIX\\
    XIY&ZIX&YIZ
    \end{bmatrix}, \;
    \mathbb{C} = 
    \begin{bmatrix}
    IXY&IZX&IYZ\\
    IZZ&IYY&IXX\\
    IYX&IXZ&IZY
    \end{bmatrix}
\end{equation}

Now, in general one can form the following 3-qubit Hermitian observable
\begin{equation}\label{eq:hermitian_H}
    \mathscr{H} := \mathscr{A} + \mathscr{B} + \mathscr{C}
\end{equation}
with $\mathscr{A}, \mathscr{B}, \mathscr{C}$ each collections of 2-qubit operators and an identity:
\begin{equation}\label{eq:operator_A}
    \mathscr{A} := \sum_{ij}A^{ij}\sigma_{i}\otimes\sigma_{j}\otimes I, \quad \mathscr{B} := \sum_{ij}B^{ij}\sigma_{i}\otimes I \otimes \sigma_{j}, \quad \mathscr{C} := \sum_{ij}C^{ij}I \otimes \sigma_{i} \otimes \sigma_{j}
\end{equation}
Here the coefficients $A^{ij}$ etc. take on the values $\pm1$, and so for each grid $\mathbb{A},\mathbb{B},\mathbb{C}$, there is an associated matrix $A,B,C$ of classical hidden variables.

We are interested in the following cubic invariant (see \cite{faulkner} for original discussion, and \cite{LSVP} for connection with black hole entropy):
\begin{equation}\label{eq:cartan_invariant}
    I_{3}(\mathscr{H}) := -\frac{1}{48}\text{Tr}(\mathscr{H}^{3})
\end{equation}
Writing this in terms of the separated components \eqref{eq:hermitian_H} and their coefficient matrices one obtains:
\begin{equation}\label{eq:cartan_det_form}
    I_{3}(\mathscr{H}) = \text{Det}(A) + \text{Det}(B) + \text{Det}(C) - \text{Tr}(CB^{T}A)
\end{equation}
This invariant describes the point-line incidence structure of $GQ(2,4)$, the eloily, in the sense that it contains 27 parameters spread out over 45 terms. Each term is associated with a context, and the parameters with the collinear points. Moreover the monomials come with signs corresponding to the distribution of signs induced by the labeling of the configuration by $n$-qubit Pauli matrices. When the parameters are given by classical variables $A^{ij}$ etc., the invariant expresses the evaluation of all contexts by a HV model along with the line parities (term coefficients of $\pm1$). It can be reduced to invariants over the grid and doily by eliminating certain terms, briefly described at the end of this Section.

Related to this is the celebrated Cabello Inequality\cite{cabello_proposed_2010} featuring the following measure:
\begin{equation}
    \chi := \sum_{i}\langle C_{i} \rangle - \sum_{j}\langle C'_{j} \rangle
\end{equation}
where $\langle C \rangle$ is the expectation value of measuring along a positive context and $\langle C' \rangle$ along a negative context. Under the hypothesis of QM on has $\chi=N$ where $N$ is the number of contexts. However if one consider a HV model with valuation on the node defined by the matrices $A, B$ and $C$ one has:
\begin{equation}\label{eq:I3_chi}
    I_{3}(\mathscr{H}) = \chi
\end{equation}

In the notation of Cabello, one has the following testable inequality for contextuality:
\begin{equation}\label{eq:cabello_ineq}
    \chi \leq  \begin{cases}
        N, \quad \quad \quad \text{QM} \\
        N - 2d, \quad \text{hidden variables}
    \end{cases} 
\end{equation}
where $d$ its degree of contextuality.

One can sanity check the relation between $\chi$ and $I_{3}(\mathscr{H})$ via reducing to the $GQ(2,1)$ case, where $B = C = 0$ in \eqref{eq:cartan_det_form}. Here $I_{3}(\mathscr{H}) = \text{Det}(A)$, with $A$ a $3\times 3$ matrix with entries $\pm1$. As is known\cite{Hadamard_matrices} about such ``Hadamard" matrices, the determinant is valued at either $0$ or $4$. The upper bound for the HV model in \eqref{eq:cabello_ineq} gives $6-2\cdot 1=4$ as expected.

\begin{table}[!h]
    \centering
    \begin{tabular}{c|cc|cc|c}
     & $N$& $d$ & $N-2d$ & max($I_{3}$) & $t$ (sec) \\ \hline
    $GQ(2,1)$ & $6$ & $1$ & $4$ & $4$ & $<0.008$ \\ 
    $GQ(2,2)$ & $15$ & $3$ & $9$ & $9$ & $<0.119$ \\ 
    $GQ(2,4)$ & $45$ & $9$ & $27$ & $27$ & $<772$ \\ \hline
    \end{tabular}
    \caption{Table showing computed maximal values of $I_{3}$ compared to HV bound in Cabello inequality $N - 2d$, for the three quadrangles of lines size 3. Indicated on right is compute time in seconds for comparing all HV inputs.}
    \label{tab:invariant_computations}
\end{table}

For our quadrangles $GQ(2,1), GQ(2,2), GQ(2,4)$ we have degrees of contextuality $1, 3, 9$ respectively, and computed maximal values of $I_{3}(\mathscr{H})$ across all possible HV assignments are given in Table \ref{tab:invariant_computations}. For the doily case, one can express $GQ(2,4)$ as a disjoint doily and double-six arrangement, as has been done in this work, and discard the double-six component in $I_{3}(\mathscr{H})$. Then the invariant becomes the Pfaffian of a $6\times 6$ antisymmetric matrix with HV entries. The results show that they agree with the bound given by \eqref{eq:cabello_ineq}, confirming that the contextuality of these quadrangles is encapsulated by this cubic invariant.
\linebreak

\section*{Acknowledgments}
This work is supported by the Graduate school EIPHI (contract ANR-17-EURE- 0002) through the project TACTICQ, the Ministry of Culture and Innovation and the National Research, Development and Innovation Office within the Quantum Information National Laboratory of Hungary (Grant No. 2022-2.1.1-NL-2022-00004) and by the National Research Development and Innovation Office of Hungary within the Quantum Technology National Excellence Program (Project No. 2017-1.2.1-NKP-2017-0001).  We acknowledge the use of the IBM Quantum Experience and the IBMQ-research program. The views expressed are those of the authors and do not reflect the official policy or position of IBM or the IBM Quantum Experience team. One would like to thank the developers of the open-source framework Qiskit.

Péter Lévay would like to express his gratitude for the warm hospitality at UTBM during his visit. The participation of Péter in this collaboration has been supported by a visiting professor grant awarded in 2023. The authors also want to thank Metod Saniga of the Astronomical Institute Slovak Academy of Sciences for coining the term ``eloily".

\printbibliography
\pagebreak
\appendix
\section{Canonical Labelling of the Eloily}\label{app:gq24_labelling}

{\centering
{\footnotesize
    \begin{tabular}{cc|cc}%
        Subgeometry & Points & Lines & Line Parity \\ \hline
        \multirow{15}{*}{Doily $\qquad \; \left\lbrace\begin{array}{r}
            \\
            \\
            \\
            \\
            \\
            \\
            \\
            \\
            \\
            \\
            \\
            \\
            \\
            \\
            \\
              
        \end{array}\right.$} & IXZ & ZIZ - ZXI - IXZ & +1 \\
    	& XIZ & ZIZ - ZZI - IZZ & +1 \\
    	& XXI & IYY - IZX - IXZ & +1 \\
    	& ZXI & IYY - YYI - YIY & +1 \\
    	& ZZI & XXI - XIZ - IXZ & +1 \\
    	& IZX & XXI - XIX - IXX & +1 \\
    	& IXX & ZIX - ZXI - IXX & +1 \\
    	& ZIZ & ZIX - XIZ - YIY & +1 \\
    	& ZIX & ZIX - ZZI - IZX & +1 \\
    	& IZZ & XZI - ZXI - YYI & +1 \\
    	& YYI & XZI - XIX - IZX & +1 \\
    	& XIX & XZI - XIZ - IZZ & +1 \\ 
    	& XZI & XXI - ZZI - YYI & -1 \\
    	& YIY & IYY - IXX - IZZ & -1 \\
    	& IYY & ZIZ - XIX - YIY & -1 \\ 
     \multirow{12}{*}{Double-Six $\left\lbrace\begin{array}{r}
            \\
            \\
            \\
            \\
            \\
            \\
            \\
            \\
            \\
            \\
            \\
            \\
              
        \end{array}\right.$} & ZYI & ZYI - IYZ - ZIZ & +1 \\
 	& YIX & ZYI - ZIY - IYY & +1 \\
 	& XYI & ZYI - YZI - XXI & +1 \\
 	& IXY & ZYI - IYX - ZIX & +1 \\
 	& YIZ & YIX - XIY - ZIZ & +1 \\
 	& IZY & YIX - YZI - IZX & +1 \\
 	& XIY & YIX - IYX - YYI & +1 \\
 	& IYZ & YIX - YXI - IXX & +1 \\
 	& ZIY & XYI - XIY - IYY & +1 \\
 	& YZI & XYI - IYZ - XIZ & +1 \\
 	& IYX & XYI - IYX - XIX & +1 \\
 	& YXI & XYI - YXI - ZZI & +1 \\
 	& & IXY - XIY - XXI & +1 \\
 	& & IXY - ZIY - ZXI & +1 \\
 	& & IXY - IYX - IZZ & +1 \\
 	& & IXY - YXI - YIY & +1 \\
 	& & YIZ - IYZ - YYI & +1 \\
 	& & YIZ - ZIY - XIX & +1 \\
 	& & YIZ - YZI - IZZ & +1 \\
 	& & YIZ - YXI - IXZ & +1 \\
 	& & IZY - XIY - XZI & +1 \\
 	& & IZY - IYZ - IXX & +1 \\
 	& & IZY - ZIY - ZZI & +1 \\
 	& & IZY - YZI - YIY & +1 \\
 	& & IZY - IYX - IXZ & -1 \\
 	& & YIZ - XIY - ZIX & -1 \\
 	& & IXY - IYZ - IZX & -1 \\
 	& & XYI - YZI - ZXI & -1 \\
 	& & YIX - ZIY - XIZ & -1 \\
 	& & ZYI - YXI - XZI & -1 \\
    \end{tabular}
    }
\newline
\\
The canonical labelling of $GQ(2,4)$, given by $E_{YYY}$. All points and lines labelled by ``Doily" are contained in $D_{YYY}$, while the ``Double-Six" points are contained in the complement. All lines in the ``Double-Six" component intersect exactly one point of $D_{YYY}$.
}

\end{document}